\documentclass[conference]{IEEEtran}
\ifCLASSINFOpdf
\else
\fi

\usepackage{epstopdf}
\usepackage{amsmath,mathrsfs,amsfonts}
\usepackage{algorithm}
\usepackage{algorithmic}
\usepackage{graphicx}
\usepackage{subfigure}
\newtheorem{theorem}{Theorem}
\newtheorem{lemma}{Lemma}

\hyphenation{op-tical net-works semi-conduc-tor}

\begin{document}

\title{A New Space for Comparing Graphs}

\author{\IEEEauthorblockN{\textbf{Anshumali Shrivastava} }
\IEEEauthorblockA{Department of Computer Science\\Computing and Information Science\\
Cornell University\\
Ithaca, NY 14853, USA\\
Email: anshu@cs.cornell.edu}
\and
\IEEEauthorblockN{\textbf{Ping Li}}
\IEEEauthorblockA{Department of Statistics and Biostatistics\\
Department of Computer Science\\
Rutgers University\\
Piscataway, NJ 08854, USA\\
Email: pingli@stat.rutgers.edu}}


%


\maketitle

\begin{abstract}
Finding a new mathematical representations for graph, which allows direct comparison between different graph structures, is  an open-ended research direction. Having such a representation is the first prerequisite for a variety of machine learning algorithms like classification, clustering, etc., over graph datasets. In this paper, we propose a symmetric positive semidefinite matrix with the $(i,j)$-{th} entry equal to the covariance between normalized vectors $A^ie$ and $A^je$ ($e$ being vector of all ones) as a representation for graph with adjacency matrix $A$. We show that the proposed matrix representation encodes the spectrum of the underlying adjacency matrix and it also contains information about the counts of small sub-structures present in the graph such as triangles and small paths. In addition, we show that this matrix is a \emph{``graph invariant"}. All these properties make the proposed matrix a suitable object for representing graphs.

\noindent The representation, being a covariance matrix in a fixed dimensional metric space, gives a mathematical embedding for graphs. This naturally leads to a measure of similarity on graph objects. We define similarity between two given graphs as a  Bhattacharya similarity measure between their corresponding covariance matrix representations. As shown in our experimental study on the task of social network  classification, such a similarity measure outperforms other widely used state-of-the-art methodologies. Our proposed method is also computationally efficient. The computation of both the matrix representation and the similarity value can be performed in operations linear in the number of edges. This makes our  method scalable in practice.

\noindent We believe our  theoretical and empirical results  provide  evidence for studying truncated power iterations, of the adjacency matrix, to characterize social networks.

\end{abstract}


%
\IEEEpeerreviewmaketitle

\section{Introduction}

The study of social networks is becoming increasingly popular.  A whole new set of information about an individual is gained by analyzing the data that is derived from his/her social network. Personal social network of an individual consisting only of neighbors and connections between them, also known as ``ego network", has recently grabbed significant attention~\cite{mcauley2012,ugander2013}. This new view of the gigantic incomprehensible social network as a collection of small informative overlapping ego networks generates a huge collection of graphs, which leads to a closer and more tractable investigation.

This enormous collection of ego networks, one centered at each user, opens doors for many interesting possibilities which were not explored before.  For instance, consider the scientific collaboration ego network of an individual. It is known that collaboration follows different patterns across different fields~\cite{newman2001}. Some scientific communities are more tightly linked among themselves compared to other fields having fewer dependencies among the collaborators. For instance, people working in experimental high energy physics are very much dependent on specialized labs worldwide (for example CERN), and hence it is more likely that scientists in this field have a lot of collaboration among themselves. Collaboration network in such a scientific domain will exhibit more densely connected network compared to other fields where people prefer to work more independently.

The peculiarity in the collaboration network gets reflected in the ego network as well. For an individual belonging to a more tightly connected field, such as high energy physics, it is more likely that there is collaboration among the individual's coauthors.  Thus, we can expect the collaboration ego network of an individual to contain information about the characteristic of his/her research. By utilizing this information, it should be possible to discriminate (classify) between scientists  based on the ego networks of their collaboration. This information can be useful in many applications, for instance, in user based recommendations~\cite{MoralesGL12,he2010social}, recommending jobs~\cite{Paparrizos:2011}, discovering new collaborations~\cite{chen2011collabseer}, citation recommendations~\cite{he2010context}.

The focus of this paper is on social network classification or equivalently graph classification. The first prerequisite for classifying networks is having the ``right" measure of similarity between different graph structures. Finding such a similarity measure is directly related to the problem of computing meaningful mathematical embedding  of network structures. In this work, we address this fundamental problem of finding an appropriate tractable mathematical representation for graphs.

There are many  theories that show the peculiarities of social networks~\cite{strogatz2001exploring,broder2000graph,kumar2010structure}. For instance, it is known that the spectrum of the  adjacency matrix of a real-world graph is very specific. In particular, it has been observed that scale-free graphs develop a triangle like spectral density with a power-law tail, while small-world graphs have a complex spectral density consisting of several sharp peaks~\cite{farkas2001spectra}. Despite such insight into social graph structures, finding a meaningful mathematical representation for these networks where  various graph structures can be directly compared or analyzed in a  common space  is  an understudied area.  Note that the eigenvalues of a graph, which characterize its spectrum, are not directly comparable. Moreover, the eigenvalues as feature vector is not a common space because a larger graph will have more number of significant eigenvalues compared to a smaller graph.

Recently it was shown that representing graphs as a normalized frequency vector, by counting the number of occurrences of various small $k$-size subgraphs ($k =$ 3 or 4), leads to an informative representation~\cite{shervashidze2009efficient, ugander2013}. It was shown that this representation naturally models known distinctive social network characteristics like the ``\emph{triadic closure}". Computing similarity between two graphs as the inner product between such frequency vector representations leads to the state-of-the-art social network classification algorithms.

It is not clear that a histogram based only on counting small subgraphs sufficiently captures all the properties of a graph structure. Only counting small $k$-subgraphs ($k =$ 3 or 4) loses information. It is also not very clear what is the right size  $k$ that provides the right tradeoff between computation and expressiveness. For instance, we observe that (see Section~\ref{sec:comp}) $k =5$ leads to  improvement over $k = 4$ but it comes with a significant computational cost. Although, it is known that histograms based on counting subgraphs of size $k$ can be reasonably approximated by sampling few induced subgraphs of size $k$,  counting subgraphs with $k \ge 5$ is still computationally expensive because it requires testing the given sampled subgraph with the representative set of graphs for isomorphism (see Section~\ref{sec:comp}).  Finding other rich representation for graph, which aptly captures its behavior and is also computationally inexpensive, is an important research direction.

One challenge in meaningfully representing graphs in a common space is the basic requirement that isomorphic graphs should map to the same object.  Features based on counting substructures, for example the frequency of subgraphs,  satisfy this requirement by default but ensuring this property is not trivial if we take a non-counting based approach.

\noindent{\bf Our Contributions:}  We take an alternate route and characterize graph based on the truncated power iteration of the corresponding adjacency matrix $A$, starting with the vector of all ones denoted by $e$. Such a power iteration generates vector $A^ie$ in the $i^{th}$ iteration. We argue that the covariance between vectors of the form $A^ie$ and $A^je$, given some $i$ and $j$, is an informative feature for a given graph. We show that
these covariances are ``graph invariants". They also contain information about the spectrum of the adjacency matrix which is an important characteristic of a random graph~\cite{chung2003spectra}. In addition, taking an altogether different view, it can be shown that these covariances are also related to the counts of small local structures in the given graph.

Instead of a histogram based feature vector representation, we represent graph as a symmetric positive semidefinite covariance matrix $C^A$ whose $(i,j)$-th entry is the covariance between vectors $A^ie$ and $A^je$. To the best of our knowledge this is the first representation of its kind. We further compute  similarity between two given graphs as the standard Bhattacharya similarity between the corresponding covariance matrix representations. Our proposal follows a simple procedure involving only matrix vector multiplications and summations. The entire procedure can be computed in time linear in the number of edges which makes our approach scalable in practice. Similarity based on this new representation outperforms exiting methods on the task of real social network classification.  For example, using the similarity based on the histogram based representation, by counting the number of small subgraphs, performs poorly compared to the proposed measure. These encouraging results  provide  motivation for studying power iteration of the adjacency matrix for social network analysis.

In addition to the above contributions, this paper  provides some  interesting insights in the domain of the collaboration networks. We show that it is possible to distinguish researchers working in different experimental physics sub-domains  just based on the ego network of the researcher's scientific collaboration. To the best of our knowledge this is the first work that explores the information contained in the ego network of scientific collaborations. The results presented could be of independent interest in itself.

\section{Notations and Related Concepts}\label{sec_notation}

The focus of this work is on undirected, unweighted and connected graphs. Any graph $G = \{V, E\}$, with $|V| =n$  and $|E| = m$, is represented by an adjacency matrix $A \in \mathbb{R}^{n \times n}$, where $A_{i,j} = 1$ if and only if $(i,j) \in E$. For a matrix A, we use $A_{(i),(:)} \in \mathbb{R}^{1 \times n}$ to denote the $i^{th}$ row of matrix $A$, while $A_{(:),(j)} \in \mathbb{R}^{n \times 1}$ denotes its $j^{th}$ column.  We use $e$ to denote the vector with all components being 1. Dimension of vector $e$ will be implicit depending on the operation. Vectors are by default column vectors ($\mathbb{R}^{n \times 1}$). The transpose of a matrix $A$ is denoted by $A^T$, defined as $A^T_{i,j} = A_{j,i}$. For a vector $v$, we use $v(i)$ to denotes its $i^{th}$ component.

Two graphs $G$ and $H$ are {\bf \emph{isomorphic}} if there is a bijection between the vertex sets of $G$ and $H$,  $f \colon V(G) \to V(H)$,  such that any two vertices $u, v \in V^G$  are adjacent in $G$ if and only if $f(u)$ and $f(v)$ are adjacent in $H$. Every permutation $\pi:\{1,2,..,n\} \rightarrow \{1,2,..,n\}$ is associated with a corresponding {\bf permutation matrix $P$}. The matrix operator $P$ left multiplied to  matrix $A$ shuffles the rows according to $\pi$ while right multiplication with $P$ shuffles the columns, i.e., matrix $PA$ can be  obtained by shuffling the rows of $A$ under $\pi$ and $AP$ can be obtained by shuffling the columns of $A$ under $\pi$.  Given an adjacency matrix $A$, graphs corresponding to adjacency matrix $A$ and $PAP^T$ are isomorphic, i.e., they represent the same graph structure.  A property of graph, which does not change under the transformation of reordering of vertices is called {\bf \emph{Graph Invariant}}.

For adjacency matrix $A$, let $\lambda_1 \ge \lambda_2 \ge ... \ge \lambda_n$ be the eigenvalues  and $v_1, v_2, ... v_n $ be the corresponding eigenvectors. We denote the component-wise sum of the eigenvectors by $s_1, s_2, ..., s_n$, i.e., $s_i$ denotes the component-wise sum of  $v_i$. A path $p$ of length $L$ is  a sequence of $L+1$ vertices $\{v_1,  v_2, ... v_{L+1}\}$, such that there exists an edge between any two consecutive terms in the sequence, i.e.,  $(v_i,v_{i+1}) \in E$  $\forall  i \in \{1,2,...,L\}$. An edge $x$ belongs to a path $p =\{v_1,  v_2, ... v_{L+1}\}$ if there exists $i$ such that $x = (v_i,v_{i+1})$.

In our analysis, we can have paths with repeated nodes, i.e. we will encounter paths where $v_i = v_j$ for $i \ne j$. A path will be called \emph{``simple"} if there is no such repetition of nodes. Formally, a {\bf \emph{simple path}} of length $L$ is a path of length $L$, such that,   $v_i \ne v_j$ whenever $i \ne j$. Two paths $p$ and $q$ are different if there exist an edge $e$, such that  either of the two conditions $(e \in p \text{ and } e \notin q)$ or $(e \in q \text{ and } e \notin p)$ holds, i.e., there exists one edge which is not contained in one of the paths but contained in the other. We denote the number of all the different \emph{``simple paths"} of length $L$ in a given graph by ${\bf P_L}$ and the total number of triangles by ${\bf \Delta}$. For clarity we will use [] to highlight scalar quantities such as $[e^tAe]$.

\section{Graphs as a  Positive Semidefinite Matrix}

A graph is fully described by its adjacency matrix. A good characterization of a matrix operator is a small history of its \emph{power iteration}. \emph{Power iteration} of a matrix $A \in \mathbb{R}^{n \times n}$ on a given starting vector $v \in \mathbb{R}^{n \times 1}$ computes normalized $A^iv \in \mathbb{R}^{n \times 1}$ in the $i^{th}$ iteration.

In one of the early  results~\cite{krylov1931}, it was shown that the characteristic polynomial of a matrix can be computed by using the set of vectors generated from its truncated power iterations, i.e.,  $\{v, Av, A^2v, ..., A^kv\}$. This set of vectors are more commonly known as the ``$k$-order Krylov subspace'' of matrix $A$.
The {``Krylov subspace"} leads to some of the fast linear algebraic algorithms for sparse matrices. In web domain, power iteration are used in known  algorithms including  {\em Page-rank} and {\em HITS}~\cite{kleinberg1999authoritative}. It is also known~\cite{lin2010power} that a truncated power iteration of the data similarity matrix leads to  informative feature representation for clustering. Thus, the $k$-order Krylov subspace for some appropriately chosen $k$ contains sufficient information to  describe the associated matrix.

To  represent graphs in a common mathematical space, it is a basic requirement that two isomorphic graphs should map to the same object. Although the $k$-order Krylov subspace characterizes the adjacency matrix, it can not be directly used as a common representation for the associated graph, because it  is sensitive to the reordering of nodes. Given a permutation matrix $P$,  the $k$-order Krylov subspaces of $A$ and $PAP^T$  can be very different. In other words the mapping $M : A \rightarrow \{v, Av, A^2v, ..., A^kv\}$ is not a \emph{``graph invariant''} mapping. Note that $A$ and $PAP^T$ represent same graph structure with different ordering of nodes and hence are same entities from graph perspective but not from the matrix perspective.

It turns out that if we use $v = e$, the vector of all ones, then the covariances between the different vectors in the power iteration are \emph{``graph invariant"} (see Theorem~\ref{the:inv}), i.e., their values do not change with the spurious reordering of the nodes.  We  start by defining our covariance matrix representation for the given graph, and the algorithm to compute it. In later sections we will argue why such a representation is suitable for discriminating between  graph structures.

Given a graph with adjacency matrix $A \in \mathbb{R}^{n \times n}$ and a fixed number $k$,  we compute the first $k$ terms of power iteration, which generates normalized vectors of the form $A^ie$ $i \in\{1,2, ..., k\}$. Since we start with $e$, we choose to normalize the sum equal to $n$ for the ease of analysis.   After generating $k$ vectors, we compute matrix $C^A \in \mathbb{R}^{k\times k}$ where $C^A_{i,j} = Cov(\frac{nA^ie}{||A^ie||_1},\frac{nA^je}{||A^je||_1})$, as summarized in Algorithm~\ref{alg:cov}.

\begin{algorithm}[h!]
\caption{\emph{CovarianceRepresentation(A,k)}}
\label{alg:cov}
\begin{algorithmic}
\STATE {\bfseries Input:} Adjacency matrix $A\in\mathbb{R}^{n \times n}$, $k$, the number of power iterations.\ Initialize  $x^0 = e \in \mathbb{R}^{n \times 1}.$
\FOR{$t=1$ {\bfseries to} $k$}
\STATE \hspace{0.3in}$M_{(:),(t)} =   n \times \frac{Ax^{t-1}}{||Ax^{t-1}||_1}$,  \hspace{0.3in} $x^t  = M_{(:),(t)}$
\ENDFOR
\STATE $\mu =e \in \mathbb{R}^{k \times 1}$

\STATE $C^A = \frac{1}{n} \sum_{i=1}^n(M_{(i),(:)} - \mu)(M_{(i),(:)} - \mu)^T$

\RETURN  $C^A \in \mathbb{R}^{k \times k}$
\end{algorithmic}
\end{algorithm}%

Algorithm~\ref{alg:cov} maps a given graph to a positive semidefinite matrix, which  is a graph invariant.

\begin{theorem}
\label{the:inv}
$C^A$ is symmetric positive semidefinite. For any given permutation matrix $P$ we have $C^A = C^{PAP^T}$, i.e., $C^A$ is a graph invariant.
\end{theorem}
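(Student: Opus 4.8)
The plan is to handle the two assertions separately: the positive semidefiniteness is essentially immediate from the outer-product form produced by Algorithm~\ref{alg:cov}, while the invariance follows from one structural observation about how the permutation $P$ propagates through the normalized power iteration.

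For symmetry and positive semidefiniteness, I would read off from the algorithm that $C^A = \frac{1}{n}\sum_{i=1}^n w_i w_i^T$, where $w_i = (M_{(i),(:)})^T - \mu \in \mathbb{R}^{k \times 1}$ is the $i$-th centered row written as a column vector. Each rank-one term $w_i w_i^T$ is symmetric, and for any test vector $y \in \mathbb{R}^{k \times 1}$ we have $y^T (w_i w_i^T) y = (w_i^T y)^2 \ge 0$. Summing over $i$ and scaling by the positive constant $\frac{1}{n}$ preserves both properties, so $C^A$ is symmetric and $y^T C^A y \ge 0$ for all $y$. This step is routine.

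For the graph invariance, the key lemma I would establish is that the normalized iterates of $B := PAP^T$ are exactly the permuted iterates of $A$. Writing $x^t_A$ and $x^t_B$ for the $t$-th columns of $M^A$ and $M^B$, I claim $x^t_B = P x^t_A$ for every $t$, hence $M^B = P M^A$, and I would prove this by induction on $t$. The base case uses the initialization $x^0 = e$ together with the fact that a permutation matrix fixes the all-ones vector, $Pe = e$. For the inductive step I use $P^T P = I$ to get $B x^{t-1}_B = P A P^T (P x^{t-1}_A) = P A\, x^{t-1}_A$, and then observe that a permutation preserves the $\ell_1$-norm of the (entrywise nonnegative) iterates, so the normalizing scalar satisfies $\|B x^{t-1}_B\|_1 = \|A x^{t-1}_A\|_1$; pulling $P$ outside the normalized vector then gives $x^t_B = P x^t_A$. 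This is precisely where starting the iteration at $e$ is essential: for a general seed $v$ the identity $B v = P A v$ fails (it would require $P^T v = v$), and the invariance would break.

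Finally I would conclude by noting that $M^B = P M^A$ merely reorders the rows of $M^A$ according to the underlying permutation $\pi$, without changing the set of rows; in particular every column sum, and hence the mean vector $\mu$, is unchanged. Writing $w^B_i = (M^B_{(i),(:)})^T - \mu = w^A_{\pi^{-1}(i)}$ and substituting into $C^B = \frac{1}{n}\sum_{i=1}^n w^B_i (w^B_i)^T$, a reindexing of the sum over the bijection $\pi^{-1}$ yields $C^B = \frac{1}{n}\sum_{j=1}^n w^A_j (w^A_j)^T = C^A$. The only real obstacle is bookkeeping the normalization correctly through the induction; once $Pe = e$ and the $\ell_1$-invariance of permutations are in hand, the remainder is a clean reindexing argument.
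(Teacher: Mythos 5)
Your proposal is correct and follows essentially the same route as the paper: the paper also gets positive semidefiniteness from the sample-covariance (outer-product) form of $C^A$, establishes the key identity $(PAP^T)^i e = P A^i e$ using $P^T = P^{-1}$ and $Pe = e$, and concludes via the invariance of covariance under a common permutation of both vectors, which is exactly your reindexing step. The only difference is cosmetic: you carry the $\ell_1$ normalization through an explicit induction, whereas the paper works with the unnormalized powers and leaves that bookkeeping implicit.
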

\begin{proof}
$C^A$ is sample covariance matrix of $M \in \mathbb{R}^{n \times k}$ and hence $C^A$ is symmetric positive semidefinite. Using the identity$ P^T = P^{-1}$, it is not difficult to show that for any permutation matrix P, $(PAP^T)^k = PA^kP^T$. This along with the fact $P^T\times e = e$, yields \begin{equation}\label{eq:setVec}(PAP^T)^ie = P \times A^ie.\end{equation} Thus, $C^{PAP^T}_{i,j} = Cov(P \times A^ie,P\times A^je)$. The proof follows from the fact that shuffling vectors under same permutation does not change the value of covariance between them, i.e., $$Cov(x,y) = Cov(P \times x,P \times y)$$ which implies $C^A _{i,j}= C^{PAP^T}_{i,j} \forall i,j$
\end{proof}

\noindent Note that the converse of Theorem~\ref{the:inv} is not true. We can not hope for it  because then we would have solved the intractable \emph{Graph Isomorphism Problem} by using this tractable matrix representation. For example, consider adjacency matrix of a regular graph. It has $e$ as one of its eigenvectors with eigenvalue equal to $d$, the constant degree of the regular graph. So, we have $A^ie = d^ie$ and $Cov(d^ie,d^je) = 0$. Thus, all regular graphs are mapped to the same zero matrix. Perfectly regular graphs never occur in practice, there is always some variation in the degree distribution of real-world graphs. For non regular graphs, i.e., when $e$ is not a eigenvector of the adjacency matrix, we will show in the Section~\ref{sec:proper} that the proposed $C^A$ representation is  informative.

\noindent\textbf{Alternate Motivation: Graphs as a Set of Vectors}.
There is an alternate way to motivate this representation and Theorem~\ref{the:inv}. At time $t =0$, we start with a value of $1$ on each of the nodes. At every time step $t$ we update every value on each node to the sum of numbers, from time $t-1$, on each of its neighbors. It is not difficult to show that under this process, for Node $i$, at time step $t$ we obtain $A^te(i)$. These kind of updates are key in many link analysis algorithms including Hyper-text Induced Topic Search (HITS)~\cite{kleinberg1999authoritative}. Ignoring normalization the sequence of numbers obtained over time, by such process, on node $i$ corresponds to the row $i$ of the matrix $M$. Eq. (\ref{eq:setVec}) simply tells us that reordering of nodes under any permutation does not affect the sequence of these numbers generated on each node.

Hence, we can associate a set of $n$ vectors, the $n$ rows of  $M \in\mathbb{R}^{n \times k}$, with graph $G$. This set of vectors do not change with reordering of nodes, they just shuffle among themselves. We are therefore looking for a mathematical representation that describes this set of $n$ ($k$ dimensional) vectors. Probability distributions, in particular Gaussian, are a natural way to model a set of vectors~\cite{KondorJ03}. The idea is to find the maximum likelihood Gaussian distribution fitting the given set of vectors and use this distribution, a mathematical object, as the required representation. Note that this  distribution is invariant under the ordering of vectors, and hence we get Theorem~\ref{the:inv}. The central  component of a multivariate Gaussian distribution is its covariance matrix and this naturally motivate us to study the object $C^A$, which is the covariance matrix of row vectors in $M$  associated with the graph.

\section{More Properties of Matrix $C^A$}
\label{sec:proper}
In this section, we argue that $C^A$ encodes  key features of  the given graph, making it an informative representation. In particular, we show that $C^A$ contains information about the spectral properties of $A$ as well as the counts of small substructures present in the graph. We assume that the graph is not perfectly regular, i.e., $e$ is not one of the eigenvectors of $A$. This is a  reasonable assumption because in real networks there are always  fluctuations in the degree distribution.

\vspace{0.08in}

We first start by showing connections between the matrix $C^A$  and the spectral properties of $A$. See Section~\ref{sec_notation} for the notation, for example, $\lambda_t$ and $s_t$.

\begin{theorem}
\label{theo:cij}
$C^A_{i,j} = \left(\frac{n\left(\sum_{t = 1}^n \lambda_t^{i+j}s_t^2\right)}{\left(\sum_{t = 1}^n \lambda_t^{i}s_t^2\right)\left(\sum_{t = 1}^n \lambda_t^{j}s_t^2\right)}\right) -1$
\end{theorem}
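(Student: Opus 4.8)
The plan is to reduce $C^A_{i,j}$ to a ratio of quadratic forms of the type $e^T A^m e$, and then diagonalize $A$ to read off the eigenvalue expression. First I would pin down the closed form of the columns of $M$. Running the recursion $x^t = M_{(:),(t)} = n\,A x^{t-1}/||A x^{t-1}||_1$ with $x^0 = e$, the scalar normalizing factor carried over from the previous step cancels between numerator and denominator, so a short induction gives $M_{(:),(t)} = \frac{n}{c_t}\,A^t e$, where I abbreviate $c_t := ||A^t e||_1$. Since $A$ has nonnegative entries and $e \ge 0$, each $A^t e$ is nonnegative, whence $c_t = e^T A^t e$. This also justifies the choice $\mu = e$ in Algorithm~\ref{alg:cov}: the entries of column $t$ sum to $\frac{n}{c_t} e^T A^t e = n$, so every column of $M$ has mean exactly $1$.

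Next I would expand the covariance. Writing $w^{(i)} = M_{(:),(i)}$, the defining sum is $C^A_{i,j} = \frac{1}{n}\sum_{l=1}^n (w^{(i)}_l - 1)(w^{(j)}_l - 1)$. Expanding the product and using $e^T w^{(i)} = e^T w^{(j)} = n$ from the previous step, the two linear cross terms and the constant collapse, leaving $C^A_{i,j} = \frac{1}{n}(w^{(i)})^T w^{(j)} - 1$. The remaining inner product is $(w^{(i)})^T w^{(j)} = \frac{n^2}{c_i c_j}(A^i e)^T (A^j e) = \frac{n^2}{c_i c_j}\, e^T A^{i+j} e = \frac{n^2 c_{i+j}}{c_i c_j}$, where I use the symmetry of $A$ (so $(A^i)^T = A^i$) to merge the two powers. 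This yields the compact intermediate identity $C^A_{i,j} = \frac{n\,c_{i+j}}{c_i c_j} - 1$.

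Finally I would diagonalize. As $A$ is real symmetric, it has an orthonormal eigenbasis and $A^m = \sum_{t=1}^n \lambda_t^m v_t v_t^T$ for every $m$. Sandwiching between $e$ and recalling that $s_t = e^T v_t$ is the component-wise sum of $v_t$, I get $c_m = e^T A^m e = \sum_{t=1}^n \lambda_t^m s_t^2$. Substituting the three instances $c_{i+j}$, $c_i$, and $c_j$ into the intermediate identity produces exactly the stated expression for $C^A_{i,j}$.

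I expect the only genuinely delicate point to be the normalization bookkeeping in the first step: verifying that the per-iteration $L_1$ factors telescope so that column $t$ is truly a scalar multiple of $A^t e$, and that this scalar makes $e$ the exact column mean (which is what lets the linear terms in the covariance vanish cleanly). Once that is secured, the covariance expansion and the spectral substitution are routine, resting only on the nonnegativity and symmetry of the adjacency matrix.
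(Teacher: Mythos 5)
Your proposal is correct and follows essentially the same route as the paper: reduce $C^A_{i,j}$ to the identity $n\,[e^TA^{i+j}e]/([e^TA^ie][e^TA^je]) - 1$ and then expand $e$ in the eigenbasis of $A$ to evaluate each quadratic form as $\sum_t \lambda_t^m s_t^2$. The only difference is that you spell out the telescoping of the $L_1$ normalization factors and the verification that $\mu = e$ is the exact column mean, details the paper leaves implicit.
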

\begin{proof}
The mean of vector $A^ie$ can be written as $\frac{[e^TA^ie]}{n}$. With this observation the covariance between normalized $A^ie$ and $A^je$ (which is  equal to  $C^A(i,j)$) can be written as
\begin{align*}
Cov(A^ie,A^je) &=  \frac{1}{n}\left(n\frac{A^ie}{[e^TA^ie]} - e\right)^T\left(n\frac{A^je}{[e^TA^je]} - e\right)\\
&= \frac{1}{n}\left(n^2\frac{[e^TA^{i+j}e]}{[e^TA^ie][e^TA^je]} - n - n +e^Te\right)\\
&=  \left(n\frac{[e^TA^{i+j}e]}{[e^TA^ie][e^TA^je]}\right) - 1
\end{align*}
Thus, we have
\begin{equation}
\label{eq:cij}
C^A_{i,j} = \left(n\frac{[e^TA^{i+j}e]}{[e^TA^ie][e^TA^je]}\right) - 1
\end{equation}
To compute $[e^TA^ie]$, we use the fact that the vector $A^ie$ can be written in terms of eigenvalues and eigenvectors of $A$ as \begin{equation}\label{eq:2}A^ie = [s_1\lambda_1^i]v_1 + [s_2\lambda_2^i]v_2 + ... + [s_n\lambda_n^i]v_n.\end{equation}
This follows from the representation of $e$ in the eigenbasis of $A$, i.e., $e = s_1v_1 + s_2v_2 + ... + s_nv_n$. Using the eigenvector property $A^iv_t = \lambda^tv_t$, we have
\begin{align*}
[e^TA^ie] &= \sum_{t = 1}^n \lambda_t^i s_t [e^Tv_t]
= \sum_{t =1}^n \lambda_t^i s_t^2
\end{align*}

Substituting this value for terms $[e^TA^ie]$ in Eq. (\ref{eq:cij}) leads to the desired expression.
\end{proof}

\noindent{\bf Remarks on Theorem~\ref{theo:cij}:} We can see that different elements of matrix $C^A$  are ratios of polynomial expressions in $\lambda_t$ and $s_t$.
Given $C^A$, recovering values of $\lambda_t$ and $s_t$ $\forall\ t$ boils down to solving a set of nonlinear polynomial equations of the form given in Theorem~\ref{theo:cij} for different values of $i$ and $j$.  For a given value of $k$, we obtain a set of $\frac{k(k+1)}{2}$ different such equations. Although it may be hard to characterize the solution of this set of equations, but we can not expect many combinations of $\lambda_t$ and $s_t$ to satisfy all such equations, for some reasonably large value of $\frac{k(k+1)}{2}$.  Thus $C^A$ can be thought of as an almost lossless encoding of $\lambda_t$ and $s_t$ $\forall\ t$.

It is known that there is sharp concentration of eigenvalues of adjacency matrix  $A$ for random graphs~\cite{chung2003spectra}. The eigenvalues of adjacency matrix for a random Erdos-Reyni graph follows Wigner's semi-circle law~\cite{wigner1958distribution} while for power law graphs these eigenvalues obeys power law~\cite{chung2003spectra}.  These peculiar distributions of the eigenvalues are captured in the elements of $C^A_{i,j}$ which are the ratios of different polynomials in $\lambda_i$. Hence we can expect the $C^A$ representations, for graphs having different spectrum, to be very different.

In Theorem~\ref{theo:cij}, we have shown that the representation $C^A$ is tightly linked with the spectrum of adjacency matrix $A$, which is an important characteristic of the given graph. It is further known that the counts of various small local substructures contained in the graph such as the number of triangles, number of small paths, etc., are also important features~\cite{ugander2013}.  We next show that the matrix $C^A$ is actually sensitive to these counts.

\begin{theorem}
\label{theo:traingle}
Given the adjacency matrix $A$ of an undirected graph with $n$ nodes and $m$ edges, we have \\
$$C^A_{1,2} = \frac{n}{2m}\left(\frac{3\Delta + P_3 + n(Var(deg))+ m\left(\frac{4m}{n} -1\right)}{(P_2 +m)}\right) -1$$\\
where $\Delta$ denotes the total number of triangles, $P_3$ is the total number of distinct \emph{simple paths} of length 3, $P_2$ is the total number of distinct \emph{simple paths} of length 2  and
$$Var(deg) = \frac{1}{n} \sum_{i=1}^n deg(i)^2  - \left(\frac{1}{n}\sum_{i =1}^n deg(i)\right)^2$$ is the variance of degree.
\end{theorem}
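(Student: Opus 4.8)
The plan is to specialize the spectral-free identity (\ref{eq:cij}) from Theorem~\ref{theo:cij} to the case $i=1$, $j=2$, which gives $C^A_{1,2} = n\,[e^TA^3e]/\bigl([e^TAe]\,[e^TA^2e]\bigr) - 1$, and then to evaluate each of the three scalar quadratic forms $[e^TAe]$, $[e^TA^2e]$ and $[e^TA^3e]$ as a count of walks in the graph. The key fact is that for a $0$--$1$ adjacency matrix, $(A^\ell)_{u,v}$ equals the number of walks of length $\ell$ from $u$ to $v$, so $[e^TA^\ell e]=\sum_{u,v}(A^\ell)_{u,v}$ counts all walks of length $\ell$ with ordered endpoints. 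I then translate these walk counts into the substructure counts $\Delta$, $P_3$, $P_2$ and into functions of the degree sequence.

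First the two easy terms. Here $[e^TAe]=\sum_{u,v}A_{u,v}$ counts walks of length one, i.e.\ ordered edges, so $[e^TAe]=2m$. For the second term I use $A^2e=A(Ae)$ together with $(Ae)_u=deg(u)$ and the symmetry of $A$, giving $[e^TA^2e]=\|Ae\|_2^2=\sum_u deg(u)^2$. Combining this with the identity $P_2=\sum_u\binom{deg(u)}{2}$ (a simple path of length two is determined by its center $u$ and an unordered pair of its neighbors) yields $[e^TA^2e]=2(P_2+m)$, which already produces the prefactor $n/(2m)$ and the denominator $(P_2+m)$ appearing in the statement.

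The substance of the proof is the evaluation of $[e^TA^3e]$, the number of length-three walks $(v_0,v_1,v_2,v_3)$. I would partition these walks according to the pattern of coincidences among the non-consecutive vertices; since consecutive vertices are joined by an edge and there are no self-loops, the only possible coincidences are $v_0=v_2$, $v_1=v_3$, $v_0=v_3$. The walks with all four vertices distinct are exactly the simple paths of length three counted in both directions, contributing $2P_3$; the walks with $v_0=v_3$ and $v_0,v_1,v_2$ distinct are the closed triangular walks, each triangle supplying $6$ of them (three starting vertices times two orientations), contributing $6\Delta$; the walks $v_0,v_1,v_0,v_3$ (only $v_0=v_2$) and $v_0,v_1,v_2,v_1$ (only $v_1=v_3$) each contribute $\sum_u deg(u)(deg(u)-1)$; and the fully degenerate walks $v_0,v_1,v_0,v_1$ (both $v_0=v_2$ and $v_1=v_3$) contribute $2m$. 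Hence $[e^TA^3e]=6\Delta+2P_3+2\sum_u deg(u)(deg(u)-1)+2m$.

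It then remains to rewrite the degree contribution in the demanded form. Using $\sum_u deg(u)=2m$ and the definition of $Var(deg)$, one has $\sum_u deg(u)^2 = n\,Var(deg)+4m^2/n$, so that $2\sum_u deg(u)(deg(u)-1)+2m = 2n\,Var(deg)+2m(4m/n-1)$. Substituting the three evaluated quadratic forms into (\ref{eq:cij}) and extracting the common factor of two from $[e^TA^3e]$ against the $4m$ in the denominator gives exactly the claimed expression. The only delicate step is the walk enumeration: the main obstacle is getting the disjoint case analysis on the vertex coincidences right, so that the triangular, simple-path, and backtracking walks are each counted with the correct multiplicity and never double counted. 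Everything after that is routine algebra.
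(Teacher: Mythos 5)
Your proof is correct and follows essentially the same route as the paper: specialize Eq.~(\ref{eq:cij}) to $i=1$, $j=2$ and evaluate $[e^TAe]$, $[e^TA^2e]$, $[e^TA^3e]$ as walk counts decomposed into simple paths, triangles, and backtracking (node-repeated) walks, exactly as in Lemmas~\ref{lem:pat2} and~\ref{lem:pat3}. The only cosmetic difference is that you enumerate the degenerate length-3 walks by their vertex-coincidence pattern whereas the paper counts them per edge as $deg(i)+deg(j)-1$; both yield $2\sum_u deg(u)^2 - 2m$ and hence the same final expression.
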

\begin{proof}

From Eq. (\ref{eq:cij}), we have  \begin{equation}\label{eq:c12}C^A_{1,2} = \left(n\frac{[e^TA^3e]}{[e^TAe][e^TA^2e]}\right) - 1\end{equation}
The term $[e^TAe]$ is the sum of all elements of adjacency matrix $A$, which is equal to twice the number of edges. So,
\begin{equation}[e^TAe] = 2m,\end{equation}
We need to quantify other terms $[e^TA^2e]$ and $[e^TA^3e]$. This quantification is provided in the two Lemmas below.
\begin{lemma}
\label{lem:pat2}
$$[e^TA^2e] = 2m + 2P_2.$$
\end{lemma}
\begin{proof}
We start with a simple observation that the value of $A^2_{i,j}$ is equal to the number of paths of length 2 between $i$ and $j$. Thus, $[e^tA^2e]$, which is the sum of all the elements of $A^2$, counts all possible paths of length 2 in the (undirected) graph twice. We should also have to count paths of length 2 with repeated nodes because undirected edges go both ways. There are two possible types of paths of length 2 as shown in Figure~\ref{fig:path2}: i) Node repeated paths of length 2 and ii) \emph{simple paths} of length 2 having no node repetitions.

Node repeated paths of length 2 have only one possibility. It  must be a loop of length 2, which is just an edge as shown in Figure~\ref{fig:path2}(a). The total contribution of such node repeated paths (or edges) to $[e^TA^2e]$ is 2m.  By our notation, the total number of \emph{simple paths} of length 2  (Figure~\ref{fig:path2}(b)) in the given graph is $P_2$.  Both  sets of paths are disjoint. Thus, we have  $[e^TA^2e] = 2m + 2P_2$ as required.
\end{proof}

\begin{figure}[t!]
\vspace{-0.6in}
\begin{center}
\mbox{
\includegraphics[width=3in]{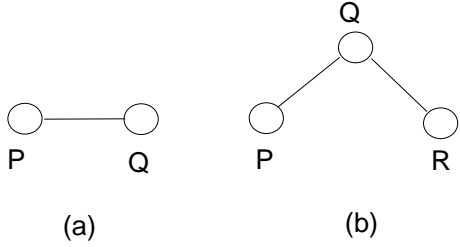}}
\end{center}
\vspace{-0.7in}
\caption{Possible types of paths of length 2, each of these two structures is counted twice in the expression $[e^TA^2e]$. a) (Node Repeated Paths): Every edge leads to two paths $P \rightarrow Q \rightarrow P$ and $Q \rightarrow P \rightarrow Q$   b) ({Simple paths}): Every \emph{simple path} of length 2 is counted twice, here $P \rightarrow Q \rightarrow R$ and $R \rightarrow Q \rightarrow P$ are the two paths contributing to the term $[e^TA^2e]$.}\label{fig:path2}
\end{figure}

\begin{lemma}
\label{lem:pat3}
\begin{align}\notag
&[e^TA^3e] = 6\Delta + 2P_3 + 2n(Var(deg))+ 2m\left(\frac{4m}{n} -1\right),\\\notag
&\text{where } \hspace{0.2in} Var(deg) = \frac{1}{n} \sum_{i=1}^n deg(i)^2  - \left(\frac{1}{n}\sum_{i =1}^n deg(i)\right)^2
\end{align}
\end{lemma}

\begin{proof}
On similar lines as Lemma~\ref{lem:pat2},  $A^3_{i,j}$ counts number of different paths of length 3. There are 3 different kinds of paths of length 3, as explained in Figure~\ref{fig:path3}, which we need to consider. We can count the contribution from each of these types independently as their contributions do no overlap and so there is no double counting. Again $[e^TA^3e]$ is twice the sum of the total number of all such paths.

{\bf Simple paths:} Just like in Lemma~\ref{lem:pat2}, any \emph{simple path} without node repetition (Figure~\ref{fig:path3}(c))  will be counted twice  in the term  $[e^TA^3e]$. Their total contribution to $[e^TA^3e]$ is $2P_3$. $P_3$ is the total number of \emph{simple paths} with length 3.

{\bf Triangles:} A triangle is the only possible loop of length 3  in the graph and it is counted 6 times in the term  $[e^TA^3e]$. There are two orientations in which a triangle can be counted from each of the three participating nodes, causing a factor of 6.  For instance in Figure~\ref{fig:path3}(b), from node $P$ there are 2 loops of length 3 to itself, $P\rightarrow R\rightarrow Q\rightarrow P$ and $P\rightarrow Q\rightarrow R\rightarrow P$. There are 2 such loop for each of the contributing nodes $Q$ and $R$. Thus, if $\Delta$ denotes the number of different triangles in the graph, then this type of structure will contribute $6\Delta$ to the term $[e^TA^3e]$.

{\bf Node Repeated Paths:} A  peculiar set of paths of length 3 are generated because of an edge $(i,j)$. In  Figure~\ref{fig:path3}(a), consider nodes $P$ and $Q$, there are many paths of length 3 with repeated nodes between $P$ and $Q$.  To go from $P$ to $Q$, we can choose any of the neighbors of $Q$, say $V$ and then there is a corresponding path $P\rightarrow Q \rightarrow V \rightarrow Q$. We can also choose any neighbor of $P$,  say $R$ and we have a path $P\rightarrow R \rightarrow P \rightarrow Q$  of length 3. Thus, given an edge $(i,j)$, the total number of  node repeated paths of length 3 is
$$NodeRepeatedPath(i,j) = deg(i) + deg(j) -1.$$
Note that the path $P \rightarrow Q \rightarrow P \rightarrow Q$, will be counted twice and therefore we subtract 1. Thus, the total contribution of these kinds of paths in the term $[e^TA^3e]$ is
$$\sum_{(i,j) \in E} (deg(i) + deg(j) -1),$$
Since the graph is undirected both $(i,j) \in E \implies (j,i) \in E$, so we do not have to use a factor of 2 like we did in other cases.
We have
\begin{align}\notag
&\sum_{(i,j) \in E} (deg(i) + deg(j) -1) = \sum_{i =1}^n \sum_{j \in Ngh(i)} (deg(i) + deg(j) -1)\\\notag
&= \sum_{i=1}^n (deg(i)^2 - deg(i)) + \sum_{i=1}^n  \sum_{j \in Ngh(i)} deg(j) \\\notag
&= 2\sum_{i=1}^n deg(i)^2 - \sum_{i=1}^ndeg(i)
\end{align}
Adding contributions of all  possible types of paths and using $\sum_{i=1}^ndeg(i) = 2m$ yields Lemma~\ref{lem:pat3} after some algebra.
\end{proof}

\begin{figure}[t!]
\vspace{-0.3in}
\begin{center}
\mbox{
\includegraphics[width=3in]{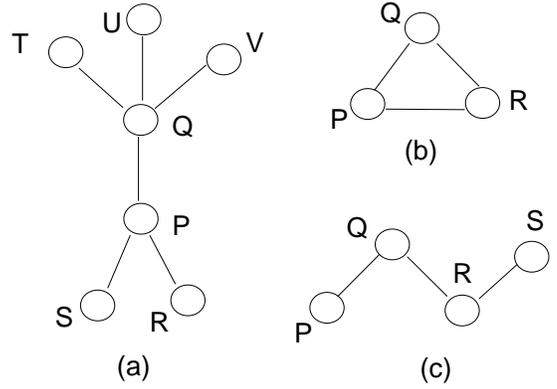}}
\end{center}
\vspace{-0.4in}
\caption{All possible types of paths of length 3, from different  structures contributing to the term $[e^TA^3e]$. a) (Node Repeated Paths): There are 6 paths of length 3 from $P$ to $Q$ (and vice versa) with repeated nodes like $P\rightarrow Q \rightarrow V \rightarrow Q$. The total number of paths of length 3 due to edge between $i$ and $j$ is equal to $deg(i) + deg(j) - 1$.  b) (Triangles): A triangle is counted 6 times in the expression $[e^TA^3e]$, in two different orientations from each of three nodes. c) (Simple Paths):  A simple paths with no node repetition of length 3 will be counted twice in $[e^TA^3e]$. }\label{fig:path3}
\end{figure}

Substituting for the terms $[e^TA^2e]$ and $[e^TA^3e]$ in Eq. (\ref{eq:c12}) from Lemmas~\ref{lem:pat2} and~\ref{lem:pat3}  leads to the desired expression.
\end{proof}

\noindent{\bf Remarks on Theorem~\ref{theo:traingle}:} From its proof, it is clear that terms of the form $[e^TA^te]$, for small values of $t$ like 2 or 3,  are weighted combinations of counts of small sub-structures like triangles and small paths along with global features like degree variance. The key observation behind the proof is that $A^t_{i,j}$ counts paths (with repeated nodes and edges) of length $t$, which in turn can be decomposed into disjoint structures over $t+1$ nodes and can be counted separately.  Extending this analysis for $t > 3$, involves dealing with more complicated bigger patterns. For instance, while computing the term  $[e^TA^4e]$, we will encounter counts of quadrilaterals along with more complex patterns.  The representation $C^A$ is informative in that it captures all such information and is sensitive to the counts of these different  substructures present in the graph.

\noindent{\bf Empirical Evidence for Theorem~\ref{theo:traingle}:} To empirically validate Theorem~\ref{theo:traingle}, we took publicly available twitter graphs\footnote{http://snap.stanford.edu/data/egonets-Twitter.html}, which consist of around 950 ego networks  of users on twitter~\cite{mcauley2012}. These graphs have around 130 nodes and 1700 edges on an average. We computed the value of $\Sigma_e^A(1,2)$ for each  graph (and the mean and  standard error). In addition, for each twitter graph, we also generated a corresponding random graph with same number of nodes and edges. To generate a random graph, we start with the required number of nodes and then select two nodes at random and add an edge between them. The process is repeated until the graph has the same number of edges as the  twitter graph. We then compute the value of $\Sigma_e^A(1,2)$ for all these generated random graphs. The mean ($\pm$ standard error, SE) value of $\Sigma_e^A(1,2)$ for twitter graphs is {0.6188 $\pm$ 0.0099}, while for the random graphs this value  is {0.0640 $\pm$ 0.0033}.

The mean ($\pm$ SE) number of triangle for twitter ego network is {14384.16 $\pm$ 819.39}, while that for random graphs is {4578.89 $\pm$ 406.54}.  It is known that  social network graphs have a high value of  {\em triadic closure probability} compared to random graphs~\cite{easley2012networks}. For any 3 randomly chosen vertices A, B and C in the graph, {triadic closure probability} (common friendships induce new friendships) is a probability of having an edge AC conditional on the event that the graph already has edges AB and BC. Social network graphs have more triangles compared to a random graph. Thus,  Theorem~\ref{theo:traingle} suggests that the value of $\Sigma_e^A(1,2)$ would be high for a social network graph compared to a random graph with same number of nodes and edges.

Combining Theorems \ref{theo:cij} and \ref{theo:traingle}, we can infer that our proposed representation $C^A$ encodes  important information to discriminate between different network structures.  Theorem~\ref{the:inv} tells us that this object is a graph invariant and a covariance matrix in a fixed dimensional space. Hence $C^A$ is directly comparable between different graph structures.

\section{Similarity between Graphs}

Given a fixed $k$, we have a representation for graphs in a common mathematical space, the space of symmetric positive semidefinite matrices $\mathbb{S}_{k \times k}$, whose mathematical properties are well understood. In particular, there are standard notions of similarity between such matrices. We define similarity  between two graphs, with adjacency matrices $A \in \mathbb{R}^{n_1 \times n_1}$ and $B \in \mathbb{R}^{n_2 \times n_2}$ respectively,  as the Bhattacharya similarity between corresponding covariance matrices $C^A$ and $C^B$ respectively:
\begin{align}
\label{eq:kernel}
Sim(C^A,C^B) &= exp^{-Dist(C^A,C^B)}\\
Dist(C^A,C^B) &=  \frac{1}{2}\log\left(\frac{det(\Sigma)}{\sqrt{(det(C^A)det(C^B))}}\right)\notag\\
\Sigma &= \frac{C^A + C^B}{2}\notag
\end{align}
Here, $det()$ is the determinant. Note that $C^A \in \mathbb{R}^{k \times k}$ and $C^B  \in \mathbb{R}^{k \times k}$ are computed using the same value of $k$.  We summarize the procedure of computing similarity between two graphs with adjacency matrices $A$ and $B$ in Algorithm~\ref{alg:sim}.

\begin{algorithm}[h!]
\caption{\emph{ComputeSimilarity(A,B,k)}}
\label{alg:sim}
\begin{algorithmic}
\vspace{0.05in}
\STATE {\bfseries Input:} Adjacency matrices $A\in\mathbb{R}^{n_1 \times n_1}$ and $B \in \mathbb{R}^{n_2 \times n_2}$, $k$, the number of power iterations.
\vspace{0.05in}
\STATE $C^A =  CovarianceRepresentation(A,k)$
\vspace{0.05in}
\STATE $C^B = CovarianceRepresentation(B,k)$
\vspace{0.05in}
\RETURN  $Sim(C^A,C^B)$ computed using Eq. (\ref{eq:kernel})
\end{algorithmic}
\end{algorithm}

\begin{theorem}
The similarity $Sim(C^A,C^B)$, defined between graphs with adjacency matrices $A$ and $B$, is positive semidefinite and is a valid kernel.
\end{theorem}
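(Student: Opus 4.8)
The plan is to recognize $Sim(C^A,C^B)$ as the \emph{Bhattacharya affinity} between two zero-mean multivariate Gaussian distributions whose covariance matrices are $C^A$ and $C^B$, and then to exhibit an explicit feature map into a Hilbert space under which $Sim$ becomes an inner product. Once $Sim$ is written as $\langle \phi(C^A),\phi(C^B)\rangle$ for a map $\phi$ into $L^2(\mathbb{R}^k)$, positive semidefiniteness is immediate, since every Gram matrix of inner products is positive semidefinite (this is exactly the Bhattacharya kernel construction of \cite{KondorJ03}). Throughout I use the standing non-regular assumption together with Theorem~\ref{the:inv} to guarantee that $C^A$ is symmetric positive \emph{definite}, so that the densities below are well defined and the determinants appearing in Eq.~(\ref{eq:kernel}) are strictly positive.

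First I would associate to each graph the density $p_A(x)=(2\pi)^{-k/2}\det(C^A)^{-1/2}\exp(-\tfrac12 x^T (C^A)^{-1}x)$ of $\mathcal{N}(0,C^A)$ on $\mathbb{R}^k$, and define $\phi(C^A)=\sqrt{p_A}\in L^2(\mathbb{R}^k)$; note $\|\phi(C^A)\|_{L^2}^2=\int p_A = 1$, so each graph maps onto the unit sphere of $L^2$. The central computation is to show that the closed form in Eq.~(\ref{eq:kernel}) equals the $L^2$ inner product of these feature vectors,
\begin{equation*}
\langle \phi(C^A),\phi(C^B)\rangle_{L^2}=\int_{\mathbb{R}^k}\sqrt{p_A(x)\,p_B(x)}\;dx = Sim(C^A,C^B).
\end{equation*}

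The main technical step is this identity. Expanding the geometric mean of the densities, the exponent becomes $-\tfrac14 x^T((C^A)^{-1}+(C^B)^{-1})x$, and evaluating the Gaussian integral gives $\int\sqrt{p_A p_B}=\det(\tfrac12((C^A)^{-1}+(C^B)^{-1}))^{-1/2}\det(C^A)^{-1/4}\det(C^B)^{-1/4}$. To match this with Eq.~(\ref{eq:kernel}) I would invoke the determinant identity coming from $(C^A)^{-1}+(C^B)^{-1}=(C^A)^{-1}(C^A+C^B)(C^B)^{-1}$, namely $\det((C^A)^{-1}+(C^B)^{-1})=\det(C^A+C^B)/(\det(C^A)\det(C^B))$. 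Substituting this collapses the integral to $\det(C^A)^{1/4}\det(C^B)^{1/4}/\det(\Sigma)^{1/2}$ with $\Sigma=\tfrac12(C^A+C^B)$, which is exactly $\exp(-Dist(C^A,C^B))=Sim(C^A,C^B)$.

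With the inner-product representation in hand the conclusion is routine: for any finite family of graphs with representations $C^{A_1},\dots,C^{A_N}$ and any reals $c_1,\dots,c_N$,
\begin{equation*}
\sum_{i,j}c_i c_j\, Sim(C^{A_i},C^{A_j})=\Big\langle \sum_i c_i\phi(C^{A_i}),\ \sum_j c_j\phi(C^{A_j})\Big\rangle_{L^2}=\Big\|\sum_i c_i\phi(C^{A_i})\Big\|_{L^2}^2\ge 0,
\end{equation*}
so every Gram matrix built from $Sim$ is positive semidefinite and $Sim$ is a valid kernel. I expect the only real obstacle to be the determinant bookkeeping in the integral evaluation, together with flagging the positive definiteness of $C^A,C^B$ so that $\phi$ is well defined; the positive-semidefiniteness itself then follows immediately from the $L^2$ embedding.
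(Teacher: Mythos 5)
Your proof is correct, and it supplies exactly the content the paper omits: the paper's own ``proof'' is a one-sentence appeal to the known fact that the Bhattacharya similarity is positive semidefinite (citing the kernel-methods literature), whereas you actually establish that fact by exhibiting the explicit feature map $\phi(C)=\sqrt{p_C}\in L^2(\mathbb{R}^k)$ and verifying that Eq.~(\ref{eq:kernel}) equals $\int\sqrt{p_A p_B}$. Your determinant bookkeeping checks out: the Gaussian integral gives $\det\bigl(\tfrac12((C^A)^{-1}+(C^B)^{-1})\bigr)^{-1/2}\det(C^A)^{-1/4}\det(C^B)^{-1/4}$, and the identity $(C^A)^{-1}+(C^B)^{-1}=(C^A)^{-1}(C^A+C^B)(C^B)^{-1}$ collapses this to $\det(C^A)^{1/4}\det(C^B)^{1/4}/\det(\Sigma)^{1/2}$, which is indeed $\exp(-Dist(C^A,C^B))$. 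So conceptually you take the same route (the Bhattacharyya/Kondor--Jebara kernel construction), but yours is a self-contained argument where the paper's is a citation; what your version buys is a proof that would survive without the external reference, plus the geometric picture of each graph sitting on the unit sphere of $L^2(\mathbb{R}^k)$.

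One point to state more carefully: you claim the non-regularity assumption together with Theorem~\ref{the:inv} guarantees that $C^A$ is positive \emph{definite}. Neither actually does --- Theorem~\ref{the:inv} only gives positive semidefiniteness, and the paper itself warns (Section~\ref{sec:k}) that $C^A$ becomes singular as $k$ grows because the power iteration converges. Strict positive definiteness must simply be assumed (or $k$ chosen small enough), but this is a precondition for Eq.~(\ref{eq:kernel}) to be well defined at all, since $\det(C^A)$ appears in a denominator under a logarithm; it is a gap in the theorem's hypotheses rather than in your argument. It would suffice to flag it as an explicit assumption rather than a consequence of earlier results.
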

This similarity is positive semidefinite, which follows  from the fact that the Bhattacharya similarity is positive semidefinite. Thus, the similarity function defined in Eq. (\ref{eq:kernel}) is a valid kernel~\cite{hofmann2008kernel} and hence can be directly used in existing machine learning algorithms operating over kernels such as SVM. We will see performance of this kernel on the task of social network classification later in Section~\ref{sec:exp}.

Although  $C^A$ is determined by the spectrum of adjacency matrix $A$, we will see  in Section~\ref{sec:result}, that simply taking a feature vector of graph invariants such as eigenvalues and computing the vector inner products is not the right way to compute similarity between graphs.  It is crucial to consider the fact that we are working in the space of positive semidefinite covariance matrices and  a similarity measure should utilize the mathematical structure of the space under consideration.

\begin{table*}[t!]
\centering
\caption{Graph statistics of ego-networks used in the paper. The ``Random'' datasets consist of random Erdos-Reyni graphs (see Section~\ref{sub:data} for more details)}
\begin{tabular}{|c|c|c|c|c|l|} \hline
STATS  & High Energy & Condensed Matter & Astro Physics & Twitter & Random\\\hline
Number of Graphs & 1000 & 415&1000 & 973 & 973\\\hline
Mean Number of Nodes &131.95 & 73.87 & 87.40 &137.57 &137.57 \\\hline
Mean Number of Edges & 8644.53 &410.20 & 1305.00 &1709.20 & 1709.20\\\hline
Mean Clustering Coefficient & 0.95& 0.86& 0.85& 0.55& 0.18 \\
\hline\end{tabular}
\label{tab:stat}
\end{table*}

\subsection{Range for Values of $k$}
\label{sec:k}
Our representation space, the space of symmetric positive semidefinite matrices, $\mathbb{S}_{k \times k}$ is dependent on the choice of $k$. In general, we only need to look at small values of $k$. It is known that power iteration converges at a geometric rate of $\frac{\lambda_2}{\lambda_1}$ to the largest eigenvector of the matrix, and hence covariance between normalized $A^ie$ and $A^je$ will converge to a constant very quickly as the values of $i$ and $j$ increase. Thus, large values of $k$ will make the matrix singular and hurt the representation.  We therefore want the value of $k$ to be reasonably small to avoid singularity of matrix $C^A$. The exact choice of $k$ depends on the dataset under consideration. We observe $k=4\sim6$  suffices in general.

\subsection{Computation Complexity}
\label{sec:CComp}
For a chosen $k$, computing the set of vectors $\{Ae, A^2e,$  $A^3e, ..., A^ke\}$ recursively as done in Algorithm~\ref{alg:cov} has computation complexity of $O(mk)$. Note that the number of nonzeros in matrix $A$ is $2m$ and each operation inside the for-loop is a sparse matrix vector multiplication, which has complexity $O(m)$.  Computing  $C^A$ requires summation of $n$ outer products of vectors of dimension $k$, which has complexity $O(nk^2)$. The total complexity of Algorithm~\ref{alg:cov} is $O(mk + nk^2)$.

Computing similarity between two graphs, with adjacency matrices $A$ and $B$ in addition requires computation of Eq.~(\ref{eq:kernel}), which involves computing determinants of $k \times k$ matrices. This operation has computational complexity $O(k^3)$. Let the number of nodes and edges in the two graphs be $(n_1, m_1)$ and $(n_2,m_2)$ respectively. Also, let $m =\max(m_1,m_2)$ and $n = \max(n_1,n_2)$. Computing similarity using Algorithm~\ref{alg:sim} requires $ O(mk + nk^2 + k^3)$ computation time.

As argued in Section~\ref{sec:k}, the value of $k$ is always a small constant like 4, 5 or 6. Thus, the total time complexity of computing the similarity between two graphs reduces to $O(m + n) = O(m)$ (as usually $m \ge n$). The most costly step is the matrix vector multiplication which can be easily parallelized, for example  on GPUs, to obtain further speedups. This makes our proposal easily scalable in practice.

\section{Social Network Classification}
\label{sec:exp}
In this section, we demonstrate the usefulness of the proposed representation for graphs and the new similarity measure in some interesting graph classification tasks. We start by describing these tasks and the corresponding datasets.

\subsection{Task and Datasets}

\label{sub:data}
Finding publicly available  datasets for graph classification task, with meaningful label, is difficult in the domain of social networks. However, due to the increasing availability of many different network structures\footnote{http://snap.stanford.edu/data/} we can create interesting and meaningful classification tasks. We create two social network classification tasks from real networks.

\noindent{\bf 1. Ego Network Classification in Scientific Collaboration (COLLAB):} Different research fields have different collaboration patterns.  For instance,  researchers in experimental high energy physics are dependent on few specialized labs worldwide (e.g., CERN).  Because of this dependency on specialized labs, various  research groups in such domains are tightly linked in terms of collaboration compared to other domains where more independent research is possible. It is an interesting  task to classify the research area of an individual by taking into account the information contained in the structure of  his/her ego collaboration network.

We used 3 public collaboration network datasets~\cite{leskovec2007graph}: 1) High energy physics collaboration network\footnote{http://snap.stanford.edu/data/ca-HepPh.html}, 2) Condensed matter physics collaboration network\footnote{http://snap.stanford.edu/data/ca-CondMat.html}, 3) Astro physics collaboration network.\footnote{http://snap.stanford.edu/data/ca-AstroPh.html} These networks are generated from e-print arXiv and cover scientific collaborations between author's papers submitted to  respective categories. If author $i$ co-authored a paper with author $j$, the graph contains an undirected edge from $i$ to $j$. If the paper is co-authored by $p$ authors, this generates a completely connected subgraph on $p$ nodes.

To generate meaningful ego-networks from each of these huge collaboration networks, we select different users who have collaborated with more than 50 researchers and extract their ego networks. The ego network is the subgraph containing the selected node along with its neighbors and all the interconnections among them. We randomly choose 1000 such users from each of the high energy physics collaboration network and the astro physics collaboration network. In case of condensed matter physics, the collaboration network only had 415 individuals with more than 50 neighbors we take all the available 415 ego networks.

In this way, we obtain 2415 undirected ego network structures. The basic statistics of these ego networks are summarized in Table~\ref{tab:stat}. We label each of the graphs according to which of the three collaboration network it belongs to.  Thus, our classification task is to take a researcher's ego collaboration network and determine whether he/she belongs to high energy physics group, condensed matter physics group, or Astro physics group. This is a specific version of a more general problem that arises in social media:  ``how audiences differ with respect to their social graph structure ?" ~\cite{adamic2008knowledge}.

For better insight into performance, we break the problem class-wise into 4 different classification tasks: 1) classifying between high energy physics and condensed matter physics (COLLAB (HEP Vs CM)) 2) classifying between  high energy physics and  astrophysics (COLLAB (HEP Vs ASTRO)) 3) classifying between  astrophysics and  condensed matter physics (COLLAB (ASTRO Vs CM)) and 4) classifying among all the three domains (COLLAB (Full)).

\noindent{\bf 2. Social Network Classification (SOCIAL):} It is known that social network graphs behave very differently from random Erdos-Reyni graphs~\cite{watts1998collective}. In particular,   a random Erdos-Reyni graph does not have the following two important properties observed in many real-world networks:
\begin{itemize}
\item  They do not generate local clustering and triadic closures. Because they have a constant, random, and independent probability of two nodes being connected, Erdos-Reyni graphs have a low clustering coefficient.
\item  They do not account for the formation of hubs. Formally, the degree distribution of Erdos-Reyni random graphs converges to a Poisson distribution, rather than a power law observed in many real-world networks.
\end{itemize}

Thus, one reasonable  task is to discriminate social network structures from random Erdos-Reyni graphs. We expect methodologies which capture properties like triadic closure and the degree distribution to perform well on this task.

We used the Twitter\footnote{http://snap.stanford.edu/data/egonets-Twitter.html} ego networks~\cite{mcauley2012}, which is a large  public  dataset of social ego networks, which contains around 950 ego networks of users from Twitter with a mean of around 130 nodes and 1700 edges per graph.  Since we are interested only in the graph structure, these directed graphs were made undirected. We do not use any information other than the adjacency matrix of the graph structure.

For each of the undirected twitter graphs, we generated a corresponding random graph with the same number of nodes and edges. We start with the required number of nodes and then select two nodes at random and add an edge between them. This process is repeated until the graph has the same number of edges as the corresponding twitter graph.  We label these graphs according to whether they are a Twitter graph or a random graph. Thus, we have a binary classification task consisting of around 2000 graph structures. The basic statistics of this dataset are summarized in Table~\ref{tab:stat}.

\subsection{Competing Methodologies}
\label{sec:method}
For classification task it suffices to have a similarity measure (commonly known as kernel) between two graphs, which is positive semidefinite. Our evaluation consists of running standard kernel $C$-SVMs~\cite{CC01a} for classification, based on the following five similarity measures.

\begin{table*}
\centering
\caption{Prediction accuracies in percentage for proposed and the state-of-the-art similarity measures
on different social network classification tasks. The reported results are averaged over 10 repetitions of
10-fold cross-validation. Standard errors are indicated using parentheses. Best results marked in \textbf{bold}.}
\begin{tabular}{|p{2.9cm}|p{2.5cm}|p{2.5cm}|p{2.5cm}|p{2.5cm}|p{2.5cm}|} \hline
Methodology  & COLLAB\hspace{0.5in} (HEnP Vs CM) & COLLAB\hspace{0.5in} (HEnP Vs ASTRO) & COLLAB\hspace{0.5in} (ASTRO Vs CM) & COLLAB (Full) & SOCIAL\hspace{0.5in} (Twitter Vs Random) \\\hline
PROPOSED (k =4) & 98.06(0.05) & {\bf 87.70(0.13)}& 89.29(0.18)& 82.94(0.16) & 99.18(0.03)\\\hline
PROPOSED (k =5) & {\bf 98.22(0.06)} &87.47(0.04) &89.26(0.17) & {\bf 83.56(0.12)} & 99.43(0.02)\\\hline
PROPOSED (k =6) & 97.51(0.04) & 82.07(0.06)& {\bf 89.65(0.09)}& 82.87(0.11) & {\bf 99.48(0.03)} \\\hline
SUBFREQ-5 & 96.97 (0.04)& 85.61(0.1)& 88.04(0.14)& 81.50(0.08)& 99.42(0.03) \\\hline
SUBFREQ-4 & 97.16 (0.05)& 82.78(0.06)& 86.93(0.12)& 78.55(0.08)& 98.30(0.08) \\\hline
SUBFREQ-3 & 96.38 (0.03)&80.35(0.06) & 82.98(0.12) & 73.42(0.13)& 89.70(0.04)\\\hline
RW & 96.12 (0.07)& 80.43(0.14)& 85.68(0.03)& 75.64(0.09)& 90.23(0.06)\\\hline
EIGS-5 & 94.85(0.18)& 77.69(0.24)& 83.16(0.47)& 72.02(0.25)& 90.74(0.22) \\\hline
EIGS-10 & 96.92(0.21)& 78.15(0.17)& 84.60(0.27)& 72.93(0.19)&92.71(0.15) \\
\hline\end{tabular}
\label{tab:result}
\end{table*}

\noindent{\bf The Proposed Similarity (PROPOSED):} This is the proposed similarity measure. For the given two graphs, we compute the similarity between them using Algorithm~\ref{alg:sim}. We show results for 3 fixed values of $k = \{4,5,6\}$.

\noindent{\bf 4-Subgraph Frequency (SUBFREQ-4):} Following~\cite{ugander2013}, for each of the graphs we first generate a feature vector of normalized frequency of subgraphs of size four. It is  known that the subgraph frequencies of arbitrarily large graphs can be accurately approximated by sampling a small number of induced subgraphs. In line with the recent work, we computed such a histogram by sampling 1000 random subgraphs over 4 nodes. We observe that 1000 is a  stable sample size and  increasing this number  has almost no effect on the accuracy.  This process generates a normalized  histograms of dimension 11 for each graph as there are 11 non-isomorphic different graphs with 4 nodes (see~\cite{ugander2013} for more details). The similarity value between two graphs is the inner product between the corresponding  11 dimensional vectors.

\noindent{\bf 5-Subgraph Frequency (SUBFREQ-5):} Recent success of counting induced subgraphs of size 4 in the domain of social networks leads to a  natural curiosity ``whether counting all subgraphs of size 5 improves the accuracy values over only counting subgraphs of size 4?" To answer this, we also consider the histogram of normalized frequency of subgraphs of size 5. Similar to the case of SUBFREQ-4, we sample 1000 random induced subgraphs of size 5 to generate a histogram representation. There are 34 non-isomorphic different graphs on 5 nodes and so this procedure generates a vector of 34 dimensions  and the similarity  between two graphs is the inner product between the corresponding 34-dim feature vectors. Even with sampling, this is an expensive task and takes significantly more time than  SUBFREQ-4.  The main reason for this is the increase in the number of isomorphic variants. Matching a given sampled graph to one of the representative structure is actually solving graph isomorphism over graphs of size 5, which is costly (see Section~\ref{sec:comp}).

\noindent{\bf 3-Subgraph Frequency (SUBFREQ-3):} To quantify the importance of size 4 subgraphs, we also compare with the histogram representation based on frequencies of subgraphs of size 3. There are  4 non-isomorphic different graphs with 3 nodes and hence here we generate a histogram of dimension 4. As counting subgraphs of size 3 is computationally cheap we do not need sampling for this case. This simple representation is known to perform quite well in practice~\cite{shervashidze2009efficient}.

\noindent{\bf Random Walk Similarity (RW):}  Random walk similarity is one of the widely used similarity measures over graphs~\cite{DasSarma:2013,vishwanathan2010graph}. It is based on a simple idea: given a pair of graphs,
perform random walks on both, and count the number of similar walks. There is a rich set of literature regarding connections of this similarity with well-known similarity measures in different domains such as Binet-Cauchy Kernels for ARMA models~\cite{vishwanathan2004binet}, rational kernels~\cite{cortes2002rational}, r-convolution kernels~\cite{haussler1999convolution}. The random walk similarity~\cite{vishwanathan2004binet} between two graphs with adjacency matrix $A$ and $B$ is defined as
$$RWSim(A,B) =
\frac{1}{n_1n_2}e^TMe,$$ where $M$ is the solution of Sylvester equation $$M = (A^TMB)exp^{-\nu}+ ee^T.$$ This can be   computed in closed-form in $O(n^3)$ time. We use standard recommendations for choosing the value of $\nu$.

\noindent{\bf Top-$k$ Eigenvalues (EIGS):} It is known that the eigenvalues of the adjacency matrix are the most important graph invariants. Therefore it is worth considering the power of simply using the dominant eigenvalues.  Note that we can not take all eigenvalues because the total number of eigenvalues varies with the graph size. Instead, we take top-$k$ eigenvalues of the corresponding adjacency matrices and compute the normalized inner product between them.  We show the results for $k = 5$ (EIGS-5) and $k=10$ (EIGS-10).

\subsection{Evaluations and Results}
\label{sec:result}
The evaluations consist of running kernel SVM on all the tasks using  six different   similarity measures as described above,   based on the standard cross-validation estimation of classification accuracy. First, we split each dataset into 10 folds of identical size. We then combine 9 of these folds and again split it into 10 parts, then use the first 9 parts to train the kernel $C$-SVM~\cite{CC01a} and use the 10th part as validation set to find the best performing value of C from $\{10^{-7},10^{-6},...,10^7\}$. With this  fixed choice of $C$, we then train the $C$-SVM on all the 9 folds (from initial 10 folds) and predict on the 10th fold acting as an independent evaluation set. The procedure is repeated 10 times with each fold acting as an independent test set once. For each task, the  procedure is  repeated 10 times randomizing over partitions. The mean classification accuracies and the standard errors are shown in Table~\ref{tab:result}. Since we have not tuned anything other than the  ``$C$" for  SVM,   the results are easily reproducible.

In those tasks, using our proposed representation and similarity measure outperforms all the  competing state-of-the-art methods, mostly with a significant margin.  This  demonstrates that the covariance matrix representation captures sufficient information about the ego networks and is capable of discriminating between them. The accuracies for three different values of $k$ are not  much different form each other, except in some cases with $k =6$. This is in line with the  argument presented in Section~\ref{sec:k} that large values of $k$ can hurt the representation. As long as $k$ is small and is in the right range, slight variations in $k$ do not have  significant change in the performance. Ideally, $k$ can be tuned based on the dataset, but for easy replication of results we used 3 fixed choices of $k$.

We see that random walk similarity performs similarly (sometimes better) to  SUBFREQ-3  which counts all the subgraphs of size 3. The performance of EIGS is very much like the random walk similarity. As expected (e.g., from the recent work~\cite{ugander2013}), counting subgraphs of size 4 (SUBFREQ 4) always improve significantly over SUBGREQ-3. Interestingly, counting subgraphs of size 5 (SUBFREQ-5) improves significantly over SUBFREQ-4 on all tasks, except for HEnP Vs CM task. This illustrates the sub-optimality of histogram obtained by counting very small graphs ($k \le 4$). Even with sampling, SUBFREQ-5 is an order of magnitude more expensive than other methodologies. As shown  in the next section, with increasing $k$, we loose the computational tractability  of counting induced $k$-subgraphs (even with sampling).

Our covariance methodology consistently performs better than (SUBFREQ 5), demonstrating the superiority of the $C^A$ representation.  As argued in Section~\ref{sec:proper}, the matrix $C^A$  even for $k = 4 \mbox{ or } 5$,  does incorporate information regarding the counts of bigger complex sub-structures in the graph. This along with the information of the full spectrum of the adjacency matrix leads to a  sound representation which outperforms state-of-the-art similarity measures over graphs.

\subsection{Why Simply Computing Graph Invariants is Not Enough?}

It can be seen that vector representation of dominant eigenvalues performs very poorly compared to the proposed representation even though Theorem~\ref{theo:cij} says that every element of the proposed matrix representation is a function of eigenvalues. It is  not very clear how to compare eigenvalues across graphs. For instance, two graphs with different sizes will usually have different number of eigenvalues. A vector consisting of few dominant eigenvalues does not seem to be the right object describing graphs, although, most of the characteristics about a given graph can be inferred from it. A good analogy to explain this would be that the mean $\mu$ and variance $\sigma^2$ fully determines a Gaussian random variable, but to compute distance between two Gaussian distributions, simply computing the euclidian distance between corresponding $(\mu,\sigma^2)$ does not work well. The proposed $C^A$ representation, a graph invariant, seems a better object which being a covariance matrix is comparable and standard similarity measures over $C^A$ performs quite well. The informativeness of features is necessary but not sufficient for learning, a classical problem in machine learning where finding the right representation is the key.

\section{Running Time Comparisons}
\label{sec:comp}
To obtain an estimate of the computational requirements, we compare the time required to compute the similarity values between two given graphs using different methodologies presented in Section~\ref{sec:method}. For both  datasets, we record the cpu-time taken for computing pairwise similarity  between all possible pairs of graphs. Thus, for COLLAB dataset, this is the time taken to compute similarity between 2415(2415 -1)/2 pairs of networks while in case of SOCIAL it is the time taken to compute similarity between  1946(1946 -1)/2 pairs of networks.  The times taken by different methodologies on these two datasets are summarized in Table~\ref{tab:runtime}.  All experiments were performed in MATLAB on an Intel(R) Xenon 3.2 Ghz CPU machine having 72 GB of RAM.

EIGS, although it performs poorly in terms of accuracy, is the fastest compared to all other algorithms, because there are very fast linear algebraic methods for computing top-$k$ eigenvalues. We can see that, except for SUBFREQ-5 and RW, all other methods are quite competitive in terms of run-time. It is not surprising that RW kernels are slower because they are known to have cubic run-time complexity. From Section~\ref{sec:CComp}, we know that the proposed methodology is actually linear in $O(E)$. Also, there are very efficient ways of computing SUBFREQ-3~\cite{shervashidze2009efficient} from the adjacency list representation which is being used in the comparisons. Although computing histogram based on counting all the subgraphs of size 4 is much more  costly than counting subgraphs of size 3, approximating the histogram by sampling is fairly efficient.  For example, on the COLLAB dataset, approximating SUBFREQ-4  by taking 1000 samples is even more efficient than counting all subgraphs of size 3.

\begin{table}[h!]
\centering
\caption{Time (in sec) required for computing all pairwise similarities of the two datasets.}
\begin{tabular}{|c|c|c|} \hline
&SOCIAL & COLLAB (Full) \\ \hline
Total Number of Graphs & 1946 & 2415 \\ \hline
PROPOSED (k =4) & 177.20& 260.56 \\
PROPOSED (k =5)&200.28 & 276.77\\
PROPOSED (k =6)& 207.20& 286.87\\
SUBFREQ-5 (1000 Samp) &5678.67& 7433.41\\
SUBFREQ-4 (1000 Samp) & 193.39 & 265.77\\
SUBFREQ-3 (All)  & 115.58 & 369.83\\
RW & 19669.24&25195.54\\
EIGS-5 & 36.84 & 26.03\\
EIGS-10 & 41.15 &29.46\\
\hline\end{tabular}
\label{tab:runtime}
\end{table}

However, even with sampling, SUBFREQ-5 is an order of magnitude slower. To understand this, let us review the process of computing the histogram by counting subgraphs. There are 34 graph structures over 5 nodes unique up to isomorphism. Each of these 34 structures has $5! = 120$ many isomorphic variants (one for every permutation). To compute a histogram over these 34 structures, we first sample an induced 5-subgraph from the given graph. The next step is to match this subgraph to one of the 34 structures. This requires determining which of the 34 graphs is isomorphic with the given sampled subgraph. The process is repeated 1000 times for every sample.  Thus every sampling step requires solving graph isomorphism problem. Even  SUBFREQ-4 has the same problem but there are only 11 possible subgraphs and the number of isomorphic structures for each graph is only $4! = 24$, which is still efficient. This scenario starts becoming intractable as we go beyond 5 because of the combinatorially hard graph isomorphism problem.

SUBFREQ-5, although it is computationally very expensive, improves over SUBFREQ-4. The proposed similarity based on $C^A$ is almost as cheap as SUBFREQ-4 but performs better than even SUBFREQ-5. Counting based approaches, although they capture information, quickly loose tractability once we start counting bigger substructures. Power iteration of the adjacency matrix is a nice and computationally efficient  way of capturing information about the underlying graph.

\section{Conclusions}

We embed graphs into a new mathematical space, the space of symmetric positive semidefinite matrices $\mathbb{S}_{k \times k}$. We take an altogether different approach of characterizing graphs based on the covariance matrix of the vectors obtained from the power iteration of the adjacency matrix. Our analysis indicates that the proposed matrix representation $C^A$ contains most of the important characteristic  information about the networks structure.  Since the $C^A$ representation is a covariance matrix in a fixed dimensional space, it naturally gives a measure of similarity (or distance) between different graphs. The overall procedure is simple and scalable in that it can be computed in time linear in number of edges.

Experimental evaluations demonstrate the superiority of the $C^A$ representation, over other state-of-the-art methods,  in ego network classification tasks.  Running time comparisons indicate that the proposed approach provides the right balance between the expressiveness of representation and the computational tractability.
Finding tractable and meaningful  representations of graph is  a fundamental problem,  we believe our results as shown will provide  motivation for using the new representation  in analyzing real networks.


\end{document}